\newtheorem{Theorem}{Theorem}
\begin{document}

\preprint{APS/123-QED}

\title{Informational power of the Hoggar SIC-POVM}

\author{Anna Szymusiak}
\altaffiliation{anna.szymusiak@uj.edu.pl}
\author{Wojciech S{\l}omczy\'{n}ski}
\altaffiliation{wojciech.slomczynski@im.uj.edu.pl}
\affiliation{Institute of Mathematics, Jagiellonian University, \L ojasiewicza 6, 30-348 Krak\'{o}w, Poland}


\begin{abstract}
We compute the informational power for the Hoggar SIC-POVM in dimension $8$, i.e.\ the classical capacity
of a quantum-classical channel generated by this measurement. We show that the states constituting a maximally
informative ensemble form a twin Hoggar SIC-POVM being the image of the original one under a conjugation.
\begin{description}

\item[Mathematics Subject Classification]
81P15, 94A17, 94A40, 58D19
\item[PACS numbers]
03.65.Ta, 03.67.-a

\end{description}
\end{abstract}

\maketitle
\section{Introduction}
Among positive operator valued measures (POVMs) representing general quantum measurements,
symmetric informationally complete (SIC) POVMs, called by Christopher Fuchs `mysterious
entities', play a special role. On the one hand, they are crucial ingredients of the Quantum
Bayesianism (or QBism) approach to the foundations of quantum physics proposed fifteen years ago by
Caves, Fuchs and Schack \cite{CavFucSch02,AppEriFuc11}, on the other hand, they are widely used in
various areas of quantum information theory like quantum cryptography \cite{Ren05}, quantum state
tomography \cite{Sco06,ZhuEng11,Jia13,Benetal15}, quantum communication \cite{Oreetal11} or entanglement
detection \cite{XiZhe15}, see also \cite{Zhu12,AppDanFuc14}.

However, despite many efforts as well as positive results obtained for lower dimensions, see e.g.\ \cite{Zhu12},
these important objects remain elusive, as the problem of their existence in arbitrary dimension is still
open. Recently, this question has been reformulated in the language of various algebraic structures
(Lie groups, Lie algebras, and Jordan algebras) \cite{AppFucZhu15}, but it has also simple
interpretation in terms of metric spaces. Namely, the existence of SIC-POVMs in dimension $d$ is
equivalent to the fact that the equilateral dimension (i.e.\ the maximum number of equidistant
points) \cite{Man14,HugSal16} of $d$-dimensional complex projective space endowed with the Fubini-Study metric equals $d^2$.

The eight-dimensional Hoggar lines \cite{Hog81} provide one of the first examples of SIC-POVMs
found in dimension larger than two. It seems that this set exhibits a higher level of symmetry
than most known SIC-POVMs, and, at the same time, its symmetry has a slightly different character than
in case of all other known SIC-POVMs. This, using Blakean language, `fearful symmetry' of Hoggar lines
makes it especially interesting object of study.

The informational power of a quantum measurement, that is the maximum amount of classical information
that it can extract from any ensemble of quantum states \cite{DAretal14a}, being equal to the classical
capacity of a quantum-classical channel generated by this measurement, has received much attention in recent
years \cite{DAretal11,Oreetal11,Hol12,DAretal14b,DAr14,Szy14a,DAr15,SloSzy16}. However, this
quantity is in general not easy to compute analytically, especially in higher dimensions.
In this paper we show that the informational power of the Hoggar SIC-POVM is equal to $2\ln(4/3)$.
To this aim we use the construction of Hoggar lines newly discovered by Jedwab and Wiebe \cite{JedWie15}.
As a corollary we get that the bound for the informational power of 2-designs (including SIC-POVMs)
obtained recently by Dall'Arno \cite{DAr15} is saturated in dimension eight. Moreover, we show that
a maximally informative ensemble for a Hoggar SIC-POVM forms another `twin' Hoggar SIC-POVM being
the image of the original one under a (complex) conjugation, i.e.\ an antiunitary involutive map,
and sharing the same symmetries as the original one.

\section{SIC-POVMs}

With any finite-dimensional quantum system one can associate a complex Hilbert space $\mathbb C^d$.
Then the pure states $\mathcal P(\mathbb C^d)$ of the system are described by one-dimensional
orthogonal projections, that is
$\mathcal P(\mathbb C^d):=\{\rho\in\mathcal L(\mathbb C^d)|\rho\geq 0, \rho^2=\rho, \operatorname{tr}(\rho)=1\}$,
and the mixed states $\mathcal S(\mathbb C^d)$ are convex combinations of pure states, i.e.\
density operators on $\mathbb C^d$.

A general quantum measurement is described by a \emph{positive operator valued measure} (\emph{POVM}).
In this paper we consider the discrete version of it, i.e.\ by POVM we mean a set $\Pi:=\{\Pi_j\}_{j=1}^k$
of nonzero positive semi-definite operators on $\mathbb C^d$ satisfying the identity decomposition:
$\sum_{j=1}^k\Pi_j=\textrm I_d$. In this framework the probability of obtaining $j$-th ($j=1,\dots,k$)
outcome, given that the initial (pre-measurement) state of the system was $\rho \in \mathcal S(\mathbb C^d)$,
is equal to $p_j(\rho,\Pi):=\operatorname{tr}(\rho\Pi_j)$.

Among quantum measurements we can distinguish \emph{symmetric informationally complete} (\emph{SIC}) \emph{POVMs}, i.e.\
POVMs consisting of $d^2$ subnormalized rank-one projectors $\Pi_j:=|\phi_j\rangle\langle\phi_j|/d$ ($j=1,\dots,k$)
with equal pairwise Hilbert-Schmidt inner products: $\operatorname{tr}(\Pi_i\Pi_j)=|\langle\phi_i|\phi_j\rangle|^2/d^2=1/(d^2(d+1))$
for $i\neq j, i,j=1,\dots,k$, where $\phi_j$ are elements of the unit sphere in $\mathbb C^d$ determined up
to a phase factor. Note that this condition implies that SIC-POVMs are indeed \emph{informationally complete} (\emph{IC}),
i.e.\ the statistics of measurement uniquely determine the initial state \cite{Renetal04}. Since any IC-POVM must contain
at least $d^2$ elements, SIC-POVMs are special examples of \emph{minimal} IC-POVMs. If the pre-measurement pure state
is given by $|\psi\rangle\langle\psi|$, where $\psi$ is an element of the unit sphere in $\mathbb C^d$, then
$p_j(|\psi\rangle\langle\psi|,\Pi)= |\langle\psi|\phi_j\rangle|^2/d$.

Furthermore, let us recall that a \emph{complex projective \linebreak $t$-design} ($t\in\mathbb N$) is a set $\{\rho_j\}_{j=1}^k$ of pure states such that
\begin{eqnarray*}
\frac{1}{k^2}\sum_{j,m=1}^k f(\operatorname{tr}(\rho_j\rho_m))=
\iint\limits_{\mathcal P(\mathbb C^d)\times \mathcal P(\mathbb C^d)} f(\operatorname{tr}(\rho\sigma))\textrm{d}\mu(\rho)\textrm{d}\mu(\sigma)
\end{eqnarray*}
for every real-valued polynomial $f$ of degree $t$ or less, where $\mu$ stands for the unique unitarily invariant (Fubini-Study) probabilistic
measure on $\mathcal P(\mathbb C^d)$ \cite{Sco06}. The SIC-POVMs can be equivalently described as complex projective 2-designs (called also
\emph{spherical quantum} 2-\emph{designs}) with $d^2$ elements, see e.g.\ \cite{Renetal04}.

\section{Informational power}
\label{Infpow}

The indeterminacy of quantum measurement $\Pi:=\{\Pi_j\}_{j=1}^k$ can be quantized by a number that
characterizes the randomness of the distribution of measurements outcomes $(p_j(\rho,\Pi))_{j=1}^k$
depending on the pre-measurement state of the system $\rho \in \mathcal S(\mathbb C^d)$.
The most natural choice for such a tool is the \emph{Shannon entropy}. Thus, by the \emph{entropy of the measurement}
$\Pi$ we mean a function $H(\cdot,\Pi):\mathcal S(\mathbb C^d)\to \mathbb R$ defined by
$$H(\rho,\Pi):=\sum_{j=1}^k\eta(p_j(\rho,\Pi)),$$
where the \emph{Shannon entropy function} $\eta:[0,1]\to\mathbb R$ is given by $\eta(t):=-t\ln t$ for $t>0$
and $\eta(0):=0$; see \cite{SloSzy16} for the history and interpretation of this notion.
It follows from the concavity of $H$ that this function attains minima in the set of pure states, finding
the minimizers, however, is not a trivial task in general, even for SIC-POVMs, where only the results
for dimension two \cite{Oreetal11,SloSzy16} and three \cite{Szy14a}
has been known. In fact, the latter result was proven under the assumption that a SIC-POVM is covariant, but it follows
from \cite{HugSal16} that all SIC-POVMs in dimension three share this property. On the other hand, for an arbitrary
SIC-POVM, the maximum value of $H$ for pure pre-measurement states is equal to $((d-1)/d)\ln(d+1)$ \cite{Szy16}.

Let us now consider an ensemble $\mathcal{E}=\left(\tau_{i},p_{i}\right)_{i=1}^{m}$, where $p_{i}\geq0$ are
\emph{a priori} probabilities of density matrices \linebreak $\tau_{i}\in\mathcal{S}%
\left(  \mathbb{C}^{d}\right)  $, for $i=1,\ldots,m$, and $\sum
\nolimits_{i=1}^{m}p_{i}=1$. The \textsl{mutual information} between
$\mathcal{E}$ and $\Pi$ is given by:

$$I(\mathcal{E},\Pi):=\sum_{i=1}^m\eta(\sum_{j=1}^k P_{ij})+\sum_{j=1}^k\eta(\sum_{i=1}^m P_{ij})-\sum_{i=1}^m\sum_{j=1}^k\eta(P_{ij}),$$
where $P_{ij} := p_i\operatorname{tr}(\tau_i\Pi_j)$ for $i=1,\ldots, m$ and $j=1,\ldots,k$. This quantity can be considered
as a measure of how much information can be extracted from ensemble $\mathcal{E}$ by measurement $\Pi$. Thus
the following two questions arise: what is the maximum amount of information one can get from the
given ensemble (i.e.\ $\max_\Pi I(\mathcal{E},\Pi)$, studied, e.g.\ in \cite{Hol73,Dav78,Sasetal99}) and what
is the capability of extracting information by given measurement (i.e.\ $\max_\mathcal{E} I(\mathcal{E},\Pi)$, examined
in \cite{DAretal11,Oreetal11,DAretal14a,DAretal14b,SloSzy16}). The latter quantity, denoted
by $W(\Pi)$, is called the \emph{informational power} of $\Pi$.

Both the minimum entropy and informational power of $\Pi$ can
also be interpreted in terms of the \emph{quantum-classical channel}
$\Phi:\mathcal S(\mathbb C^d)\to\mathcal S(\mathbb C^k)$ generated
by $\Pi$ and given by $\Phi(\rho):=\sum_{j=1}^k\operatorname{tr}(\rho\Pi_j)|e_j\rangle\langle e_j|$
for some orthonormal basis $(|e_j\rangle)_{j=1}^k$ in $\mathbb C^k$.
The former quantity is equal to the \emph{minimum output entropy} of $\Phi$, $\min_\rho S(\Phi(\rho))$,
where $S$ is the \emph{von Neumann entropy} defined by $S(\tau): = -\operatorname{tr}(\tau\ln\tau)$
for $\tau \in \mathcal S(\mathbb C^d)$ \cite{Sho02}.
The latter one is just the \emph{classical capacity} $\chi(\Phi)$ of the channel $\Phi$,
given by
$\chi(\Phi):=\max_{\mathcal{E}=\left(\tau_{i},p_{i}\right)_{i=1}^{m}}\left\{S\left(\sum_{i=1}^m p_i\Phi(\tau_i)\right)-\sum_{i=1}^m p_i S(\Phi(\tau_i))\right\}$
\cite{Hol12,Oreetal11}.

The minimal entropy of $\Pi$ and its informational power are related by:
\begin{equation}
\label{infpowent}
W(\Pi)\leq\ln k-\min_{\rho \in \mathcal S(\mathbb C^d)} H(\rho,\Pi)
\end{equation}
and the equality holds if and only if there exists an ensemble $\mathcal{E}=(p_i,\tau_i)_{i=1}^m$
such that the states $\tau_i$ ($i=1,\ldots,m$) are minimizers of $H(\cdot,\Pi)$ and
$\operatorname{tr}((\sum_{i=1}^m p_i\tau_i)\Pi_j)=1/k$ for  $j=1,\ldots,k$ \cite[Proposition 6]{SloSzy16}.
This condition is in particular fulfilled if we assume that $\Pi$ is covariant with respect to an irreducible
representation, a fact already observed by Holevo \cite{Hol05}. To see this,
it is enough to consider the ensemble consisting of equiprobable elements of the orbit of any minimizer
of H under the action of this representation.

So far the informational power has been computed analytically in few cases only:
for all highly symmetric POVMs in dimension two: seven sporadic measurements, including the `tetrahedral' SIC-POVM,
and one infinite series \cite{SloSzy16} (though for some of them the result was known earlier,
see \cite{Sch89,San95,Ghietal03,DAretal11,Oreetal11}), the SIC-POVMs in dimension three \cite{Szy14a},
and the POVMs consisting of four MUBs, again in dimension three \cite{DAr14}.
The first two results has been obtained with the method developed in \cite{SloSzy16} based on the Hermite interpolation
of Shannon entropy function. In this paper we enlarge this collection, computing the informational power for the Hoggar SIC-POVM.

Let us recall that for SIC-POVMs in dimension $d$ the sum of squared probabilities of the measurement outcomes (so called
\emph{index of coincidence}, known under various names in the literature, see \cite[Sec.~8]{Ell13})
is the same for each initial pure state and equals to $r:=2/(d(d+1))$. The problem of finding the minimum of the Shannon entropy
under assumption that the index of coincidence is equal to $r$ was analyzed by Harremo\"{e}s and Tops\o{}e in \cite[Theorem~2.5]{HarTop01},
see also \cite{Zyc03}. From their result one can deduce that if $1/r \in \mathbb{N}$, then this minimum is attained
for the probability distribution  $(r,\ldots,r,0,\ldots,0)$ with $1/r$
probabilities equal to $r$, and the rest equal to $0$. Hence, the minimum entropy of a SIC-POVM
is bounded from below by $\ln(d(d+1)/2)$, and using inequality (\ref{infpowent}) with $k=d^2$, we get that its informational power
is bounded from above by $\ln(2d/(d+1))$, see also \cite[Corollary 2]{DAr15}. The achievability of this bound in dimension $d$
is equivalent to the existence of a vector (representing pure state) orthogonal to $(d-1)d/2$ elements of a SIC-POVM, and making equal
angles with $d(d+1)/2$ others, the problem already analyzed in \cite{AppEriFuc11}. Consequently, this bound is achieved for SIC-POVMs in dimensions 2 and 3,
but numerical results suggest that this is not the case for known SIC-POVMs in dimensions 4 and 5 \cite{AppEriFuc11,DAr15}. We shall see that this
bound is achieved again in dimension 8 for the Hoggar SIC-POVM.

\section{Hoggar lines and their symmetries}

The \textit{Hoggar (lines)} SIC-POVM ($HL$) was constructed with the help of
computer by Hoggar in \cite{Hog81} as the complexification of $64$ lines
through the origin in the four-dimensional quaternionic space, or more
precisely, as the set of diameters of a quaternionic polytope with $128$
vertices. In fact, he had announced this result as early as in \cite{Hog78},
and in \cite{Hog98} gave a computer independent proof that these lines are
equiangular. One year later, Zauner showed in his thesis \cite{Zau99} that
this SIC-POVM is covariant with respect to $P_3$, the quotient
of the \textit{three-qubit Pauli group} (called also the \textit{Galoisian
variant of the discrete Weyl-Heisenberg group in dimension $8$})
by its center, which is group-theoretically isomorphic to $\left(
\mathbb{Z}_{2}\otimes\mathbb{Z}_{2}\right)^{\otimes 3}$. Quite recently, Zhu
\cite[Sec.8.6]{Zhu12} proved the long expected result that the Hoggar lines are not
projectively equivalent to any SIC-POVM covariant with respect to the group
$\mathbb{Z}_{8} \otimes\mathbb{Z}_{8}$ isomorphic to the quotient
of the usual discrete Weyl-Heisenberg group in dimension $8$ by its center. 
As the Hoggar SIC-POVM is currently the only
known such example in any dimension, this property makes this object
exceptional among SIC-POVMs. In the present paper by a Hoggar SIC-POVM
we mean any SIC-POVM\ projectively equivalent to the original Hoggar construction.

In his thesis \cite[Sec.10.4]{Zhu12} Zhu analyzed
the extended symmetry group of the Hoggar lines, $\operatorname*{Sym}\left(HL\right)$,
i.e.\ the subgroup of the projective unitary-antiunitary group
$\operatorname*{PUA}\left(  \mathbb{C}^{8}\right)$ leaving this set
invariant, and showed that it has $774\,144$ elements. Zhu proved also that
$\operatorname*{Sym}\left(HL\right)$ is a subgroup of the \textit{extended
multiqubit Clifford collineation group} ($\overline{\mathcal{EC}}(8)$)
of the three-qubit Pauli group, i.e.\ its normalizer within
$\operatorname*{PUA}\left(  \mathbb{C}^{8}\right)  $, having $240\times
774\,144$ elements. Analogously, the unitary symmetry group of the Hoggar lines $\operatorname*{Sym}_{U}\left(HL\right)$
is a subgroup of order $387\,072$ of the
\emph{multiqubit Clifford collineation group} $\overline{\mathcal{C}}(8)$ with
$240 \times 387\,072$ elements. Thus, the orbit of any state from
$HL$ under the action of the (extended) Clifford group is the union of $240$
copies of $HL$. It was proved recently in \cite{Web15,Zhu15c}
that this set constitutes a $3$-design in $\mathcal P(\mathbb C^8)$.

\newpage

It is well known that $\overline{\mathcal{C}}(8)$ is a (unique) non-split extension of the symplectic
group $Sp(6,2)$ by $P_3$ \cite{Boltetal61a,Boltetal61b,Dem74,BasMoo13}.
It means that $\overline{\mathcal{C}}(8)$ acts on $P_3$ by conjugation as $Sp(6,2)$, but
$Sp(6,2)$ is not embeddable in $\overline{\mathcal{C}}(8)$.
In yet another words, though the elements of $\overline{\mathcal{C}}(8)$
can be labelled by the elements of the set $P_3 \times Sp(6,2)$, this group is not a semidirect product
of $P_3$ by $Sp(6,2)$, and, in particular, the product of two elements from $\overline{\mathcal{C}}(8)$ labelled by $(0,M_1)$ and $(0,M_2)$ for $M_1,M_2 \in Sp(6,2)$ may have non-zero first coordinate, see
\cite[Thm. 2]{DehMoo03}.

Let $\psi$ be a \emph{fiducial} vector for $HL$, i.e.\ one of the vectors from $\mathcal P(\mathbb C^8)$
generating $HL=(P_3)\psi$. Then, it is easy to show that $\operatorname*{Sym}_{U}\left(HL\right)
= P_3 \rtimes \left(\operatorname*{Sym}_{U}\left(HL\right)\right)_\psi$, where
$\left(\operatorname*{Sym}_{U}\left(HL\right)\right)_\psi$ is the stabilizer of
$\psi$ in $\operatorname*{Sym}_{U}\left(HL\right)$.
In consequence, $\left(\operatorname*{Sym}_{U}\left(HL\right)\right)_\psi \simeq
\operatorname*{Sym}_{U}\left(HL\right) / P_3$ is a subgroup of
$\overline{\mathcal{C}}(8) / P_3 \simeq Sp(6,2)$.
Moreover, we know from \cite[Sec.10.4]{Zhu12} that
$\left(\operatorname*{Sym}_{U}\left(HL\right)\right)_\psi$ has $6\,048$ elements.
However, there is only one (up to isomorphism) subgroup of $Sp(6,2)$ of order $6\,048$,
namely the derived Chevalley group $G_2'(2)$ \cite{ConLee13}. Consequently,
$\left(\operatorname*{Sym}_{U}\left(HL\right)\right)_\psi \simeq G_2'(2)$, and
so $\operatorname*{Sym}_{U}\left(HL\right) \simeq P_3 \rtimes G_2'(2)$.
However, in spite of the fact that
$(\operatorname*{Sym}_{U}\left(HL\right))_\psi \rtimes \mathbb{Z}_2\simeq
(\operatorname*{Sym}\left(HL\right))_\psi $ and
$G_2'(2) \rtimes \mathbb{Z}_2 \simeq G_2(2)$,
where $G_2(2)$ is the Chevalley group of order $12096$, it is not clear whether
$(\operatorname*{Sym}\left(HL\right))_\psi \simeq G_2(2)$.

It is natural to consider normalized rank-$1$ POVMs as subsets of the complex
projective space. It seems that the Hoggar lines are exceptional also in this
context. Clearly, they form a symmetric set, as every SIC-POVM known so far does,
but in fact they exhibit higher level of symmetry. Together with the `tetrahedral'
SIC-POVM in dimension two and the Hesse SIC-POVM in dimension three,
they are the only SIC-POVMs that are \textit{super-symmetric}, which means that
$\operatorname*{Sym}\left(  HL\right)  $ acts doubly-transitively on $HL$
\cite[Theorem~1]{Zhu15a}. As a consequence, one can deduce \cite[Corollary~1]{SloSzy16}
that they form a \textit{highly symmetric} subset of $\mathbb{C}P^{7}$ in the sense of
\cite{SloSzy16}, see also \cite{Zhu15a}.

There exist other constructions of $HL$ that were proposed by Grassl
\cite[Sec.~4.2.2]{Gra05} (the fact that his construction does indeed lead to the
set of Hoggar lines was observed later by Zhu \cite{Zhu12}), Godsil \& Roy \cite{GodRoy09},
Jiangwei \cite{Jia13}, and Jedwab \& Wiebe
\cite{Wie13,JedWie15,JedWie16}. We shall use the last of these
in the present paper.

\section{Main results}
Let us recall that a \emph{complex Hadamard matrix} $H=(h_{ij})_{i,j=0}^{d-1}$ is a $d\times d$ matrix such that $|h_{ij}|^2=1$ for $i,j=0,\ldots,d-1$, and
$$HH^\dagger=d\thinspace \textrm{I}_d.$$
In particular, if all its entries lie in $\{-1,1\}$, then $H$ is called a \emph{real Hadamard matrix}. In this case
\begin{equation}\label{diag}
\sum_{l=0}^{d-1} h_{jl}^2=d,\quad \textrm{for }j=0,\dots,d-1
\end{equation}
and
\begin{equation}\label{offdiag}
\sum_{l=0}^{d-1}h_{jl}h_{ml}=0,\quad \textrm{for }j,m=0,\dots,d-1, \, j\neq m.
\end{equation}
Two Hadamard matrices $H$ and $H'$ are called \emph{equivalent} if there exist permutation matrices
$P$, $P'$ and diagonal unitary matrices $D$, $D'$ such that $H'=DPHP'D'$.

Jedwab and Wiebe \cite{Wie13,JedWie15,JedWie16} have recently proposed a simple method of constructing SIC-POVMs
in certain dimensions, which employs complex Hadamard matrices. We recall it briefly below.

Let $H$ be a complex Hadamard $d\times d$ matrix and let $v \in \mathbb{C}$. Consider the set $H(v):=\{ H_{jk}(v)\}_{j,k=0}^{d-1}$ of $d^2$ vectors
in $\mathbb C^d$ such that $H_{jk}(v)$ is the $j$-th row of $H$ with the $k$-th coordinate multiplied by $v$. Denoting the canonical orthonormal
basis in $\mathbb C^d$ by $(e_l)_{l=0}^{d-1}$ we can write $H_{jk}(v)$ as $H_{jk}(v)=\sum_{l=0}^{d-1} h_{jl}e_l+(v-1) h_{jk}e_k$.
Jedwab and Wiebe proved in \cite[Theorem 1]{JedWie15} that $H(v)$ generates a set of $d^2$ equiangular lines in $\mathbb C^d$ if and only if:

\begin{itemize}
\item $d=2$ and $v \in \{ \pm(1\pm\sqrt 3)(1\pm i)/2\}$, or
\item $d=3$ and $v\in\{0,-2,1\pm\sqrt3 i\}$, or
\item $d=8$, $H$ is equivalent to a (unique up to equivalence) real Hadamard matrix and $v\in\{-1\pm 2i\}$.
\end{itemize}
Moreover, for $d=8$ the obtained sets of equiangular lines are the Hoggar lines.
On the other hand, for $d=2$ every complex Hadamard matrix is necessarily equivalent to a real Hadamard matrix, 
and all the SIC-POVMs are isomorphic to the `tetrahedral' one.

From now on, we shall denote the SIC-POVM
corresponding to $H(v)$ by the same letter if no confusion arises.
Now we can formulate the main results of the present paper:

\begin{Theorem}
\label{T1}
Let a complex Hadamard matrix $H$ in dimension $d\in\{2,8\}$
and $v \in \mathbb{C}$ be such that $H(v):=\{ H_{jk}(v)\}_{j,k=0}^{d-1}$
forms a set of equiangular vectors. Then the entropy of $H(v)$ is minimized by $d^2$ states in $H(\bar{v})$. Moreover, the minimal value of entropy
is $\ln(d(d+1)/2)$.
\end{Theorem}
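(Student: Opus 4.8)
The plan is to show that the probability distribution $p_{\cdot}(H_{ab}(\bar v),H(v))$ generated by a twin vector $H_{ab}(\bar v)$ has index of coincidence equal to $r = 2/(d(d+1))$ — which is automatic, since $H(v)$ is a SIC-POVM — \emph{and} has exactly $1/r = d(d+1)/2$ nonzero entries, all equal to $r$. By the Harremo\"{e}s--Tops\o{}e theorem \cite[Theorem~2.5]{HarTop01} quoted above (applicable since $1/r\in\mathbb N$ for $d\in\{2,8\}$), any pure-state distribution with this index of coincidence has Shannon entropy at least $\ln(d(d+1)/2)$, with equality exactly on such ``flat'' distributions. Hence it suffices to prove that $|\langle H_{ab}(\bar v)\,|\,H_{jk}(v)\rangle|^2$, as $(j,k)$ ranges over all $d^2$ pairs, takes only the two values $0$ and a single positive constant, and that it vanishes for precisely $d(d-1)/2$ pairs.

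The key computation is therefore the inner product
$$
\langle H_{ab}(\bar v)\,|\,H_{jk}(v)\rangle
= \Bigl\langle \sum_{l} \overline{h_{al}}\,e_l + (\bar v - 1)\overline{h_{ab}}\,e_b \;\Big|\; \sum_{l} h_{jl}\,e_l + (v-1)h_{jk}\,e_k \Bigr\rangle .
$$
Expanding and using that $H$ is real Hadamard (so $\overline{h_{al}} = h_{al}$, each $h\in\{\pm1\}$, and \eqref{diag}--\eqref{offdiag} hold), this becomes
$$
\sum_{l} h_{al}h_{jl} \;+\; (v-1)h_{jk}h_{ak} \;+\; (\bar v - 1)h_{ab}h_{jb} \;+\; (v-1)(\bar v - 1)h_{jk}h_{ak}h_{ab}h_{jb}\,[k=b].
$$
I would split into the cases $a=j$ and $a\neq j$. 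When $a=j$ the first sum is $d$ and the middle terms are $(v-1)+(\bar v-1)$ (times $+1$), plus the $[k=b]$ correction; a short calculation with $v = -1\pm 2i$ (so $v-1 = -2\pm 2i$, $|v-1|^2 = 8$, $v+\bar v - 2 = -4$) should collapse this to a clean dichotomy. When $a\neq j$ the first sum vanishes by \eqref{offdiag}, leaving $(v-1)h_{jk}h_{ak} + (\bar v-1)h_{ab}h_{jb}$ plus the $k=b$ term; here the modulus squared will depend only on the signs $h_{jk}h_{ak}$, $h_{ab}h_{jb}$ and whether $k=b$, each of which is $\pm1$, so again only finitely many values arise, and one checks they are $0$ or the target constant. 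Throughout, the normalization is $\|H_{jk}(v)\|^2 = (d-1) + |v|^2 = d - 1 + 5 = d+4$ for $d=8$ (and analogously for $d=2$), which must be divided out to get the actual probabilities $|\langle\cdot|\cdot\rangle|^2 / (d\,\|\cdot\|^2\|\cdot\|^2)$ — but since this is a common factor it does not affect which entries vanish.

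The main obstacle is the bookkeeping of signs and the $[k=b]$ special case: one must verify that across \emph{all} sign patterns realized by an actual real Hadamard matrix of order $d$, the two-valued structure genuinely holds and that the count of zeros is exactly $d(d-1)/2$ — i.e.\ that no ``accidental'' third value or wrong multiplicity sneaks in. Since the order-$8$ real Hadamard matrix is unique up to equivalence, this can in principle be settled by a direct (even computer-assisted) check, but I would prefer to argue it uniformly: the count of zeros must be $d(d-1)/2$ because $H(\bar v)$ is itself a SIC-POVM and the achievability criterion recalled before Section~IV (a pure state orthogonal to $d(d-1)/2$ SIC elements and equiangular to the remaining $d(d+1)/2$) is exactly what a flat minimizing distribution encodes; thus the combinatorics is forced once the two-valuedness is established. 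Finally, the ``moreover'' part is immediate: the minimal entropy value $\ln(d(d+1)/2)$ is read off from \cite[Theorem~2.5]{HarTop01} as above, matching the lower bound $\ln(d(d+1)/2)$ for SIC-POVMs already noted in Section~III.
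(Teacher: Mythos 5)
Your overall strategy coincides with the paper's: compute the overlaps of a twin vector $H_{ab}(\bar v)$ with all elements of $H(v)$, show the resulting distribution takes only the value $0$ and one positive constant, and invoke Harremo\"{e}s--Tops\o{}e. Your observation that the multiplicity of zeros is then forced to be $d(d-1)/2$ by normalization together with the fixed index of coincidence $2/(d(d+1))$ is correct and slightly slicker than the paper's direct count (which counts, for fixed $j\neq m$ and $n$, the $d/2$ indices $k$ with $h_{jk}h_{mk}=-h_{jn}h_{mn}$ via row orthogonality \eqref{offdiag}). However, the central computation --- the two-valuedness --- is both left unverified and set up incorrectly, and the error is fatal rather than cosmetic. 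In your expansion the two cross terms carry \emph{different} factors, $(v-1)h_{jk}h_{ak}$ and $(\bar v-1)h_{ab}h_{jb}$; this is what the bilinear (unconjugated) pairing of $H_{ab}(\bar v)$ with $H_{jk}(v)$ gives, not a Hermitian inner product. Under any consistent sesquilinear convention the conjugation hits the coefficient $(\bar v-1)$ of the twin vector and turns it into $(v-1)$, so that for $a\neq j$, $k\neq b$ one gets $(v-1)\bigl(h_{jk}h_{ak}+h_{ab}h_{jb}\bigr)$, which vanishes exactly when the two signs are opposite --- this cancellation is the entire mechanism producing the $d(d-1)/2$ zeros (cf.\ \eqref{h}). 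With your version the generic term is $s(v-1)+t(\bar v-1)$ with $s,t\in\{\pm1\}$, which for $v=-1\pm2i$ equals $\pm4$ or $\pm4i$ and never vanishes; in effect your expansion computes (up to conjugation) the overlaps of $H(v)$ with itself, which are equiangular and nowhere zero, so carried through faithfully it would refute rather than prove the claim. Your $[k=b]$ term is likewise off: it should be $(v-1)^2h_{jb}h_{ab}$, whereas $(v-1)(\bar v-1)h_{jk}h_{ak}h_{ab}h_{jb}$ collapses to the constant $|v-1|^2$.

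Two smaller points. First, the theorem is stated for complex Hadamard matrices, so the reduction to the real case needs an argument: the paper writes $H=DH'D'$ and checks that all the relevant moduli are unchanged, provided the conjugation defining $H(\bar v)$ is understood with respect to the rotated basis $e_l'=c_l'e_l$; you assume real entries from the outset. Second, you substitute only the $d=8$ values $v=-1\pm2i$; the $d=2$ case with $v=\pm(1\pm\sqrt3)(1\pm i)/2$ must also be checked, and in both cases one still has to verify that the three remaining cases ($j=m$ or $k=n$) all yield the \emph{same} nonzero constant $|v^2-1|^2=|d+2v-2|^2=|d+v^2-1|^2=|2v-2|^2$, which is exactly the ``no accidental third value'' check you defer.
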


\begin{proof}
Set $m,n=0,\ldots,d-1$.
First, we show that the sequence
$T_{mn}:=(|H_{jk}(v)\cdot H_{mn}(\bar{v})|^2)_{j,k=0}^{d-1}$
consists of only two elements, one of which is $0$.
We know that there exist a real Hadamard matrix $H'$ and diagonal unitary matrices
$D=diag(c_1,\ldots,c_d)$ and $D'=diag(c'_1,\ldots,c'_d)$ such that $H=DH'D'$.
Clearly, $(e'_l)_{l=0}^{d-1}$, where $e'_l := c'_l \cdot e_l$ $(l=0,\dots,d-1)$, is
also an orthonormal basis of $\mathbb{C}^d$.
Then
$$H_{jk}(v) = c_j \left( \sum_{l=0}^{d-1} h'_{jl}e'_l+(v-1)h'_{jk}e'_k \right)$$
for $j,k=0,\ldots,d-1$, and so
$$|H_{jk}(v)\cdot H_{mn}(\bar{v})|=|H'_{jk}(v)\cdot H'_{mn}(\bar{v})|$$
for $j,k,m,n=0,\ldots,d-1$. This identity reduces calculations to the real case,
and so from now on we assume that $H$ is a real Hadamard matrix.

Now, using (\ref{diag}) and (\ref{offdiag}), we get
\begin{multline*}
|H_{jk}(v)\cdot H_{mn}(\bar{v})|^2 = |\sum_{l,r=0}^{d-1} h_{jl}h_{mr}e_l\cdot e_r \\
+\sum_{l=0}^{d-1} (v-1)(h_{jl}h_{mn}e_l\cdot e_n+h_{jk}h_{ml}e_k\cdot e_l) \\
+(v-1)^2h_{jk}h_{mn}e_k\cdot e_n|^2 = \\
|d\delta_{jm}+(v-1)(h_{jn}h_{mn}+h_{jk}h_{mk})+(v-1)^2h_{jk}h_{mn}\delta_{kn}|^2.
\end{multline*}

In particular, for $j\neq m$ and $k\neq n$ we have
\begin{equation}
\label{h}
|H_{jk}(v)\cdot H_{mn}(\bar{v})|^2=|(v-1)(h_{jn}h_{mn}+h_{jk}h_{mk})|^2.
\end{equation}

It follows from (\ref{offdiag}) and from the fact that the entries of $H$ are $\pm 1$
that for all $m,n,j=0,\dots,d-1$ and $j\neq m$ there exist exactly $d/2$
such $k=0,\dots,d-1$ that the above expression is equal to $0$.
Otherwise, it is $|2v-2|^2$.

For $j\neq m$ and $k=n$ we get
$$|H_{jk}(v)\cdot H_{mk}(\bar{v})|^2=|v^2-1|^2;$$
on the other hand, for $j=m$ and $k\neq n$ we obtain
$$|H_{jk}(v)\cdot H_{jn}(\bar{v})|^2=|d+2v-2|^2,$$
and finally, for $j=m$ and $k=n$ we have
$$|H_{jk}(v)\cdot H_{jk}(\bar{v})|^2=|d+v^2-1|^2.$$

Now, straightforward calculations show that for the values of $v$ obtained by
Jedwab and Wiebe \cite{JedWie15} all the $d(d+1)/2$ non-zero members of the sequence $T_{mn}$
are equal and depend only on $d$ and $v$. This, in turn, implies that $T_{mn}$ attains value
$0$ with multiplicity $(d-1)d/2$.

Let us now consider the pre-measurement state generated by $H_{mn}(\bar{v})$
for $m,n=0,\ldots,d-1$, and the SIC-POVM $H(v)$. In this case, as has just been shown, the distribution of measurements outcomes provides us with $d(d+1)/2$
probabilities equal to $2/(d(d+1))$ and $(d-1)d/2$ equal to $0$. According to
the result discussed in Sec.\thinspace \ref{Infpow}, the state generated by $H_{mn}(\bar{v})$ must be a minimizer for the entropy
of $H(v)$ and the minimal value is equal to $\ln(d(d+1)/2)$.
\end{proof}

\begin{Theorem}
Under the assumptions of Theorem \ref{T1}, the informational power of
$H(v)$ is equal to $\ln(2d/(d+1))$,
and the states generated by the vectors in $H(\bar{v})$ constitute an
equiprobable maximally informative ensemble. Particularly, the informational
power of Hoggar lines is $2\ln(4/3)$.
\end{Theorem}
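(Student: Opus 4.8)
The plan is to read off the value of $W(H(v))$ from the equality case of inequality (\ref{infpowent}). Since $H(v)$ is a SIC-POVM it has $k=d^2$ elements, and Theorem \ref{T1} supplies $\min_{\rho\in\mathcal S(\mathbb C^d)}H(\rho,H(v))=\ln(d(d+1)/2)$; hence the right-hand side of (\ref{infpowent}) equals $\ln(d^2)-\ln(d(d+1)/2)=\ln(2d/(d+1))$. By \cite[Proposition 6]{SloSzy16}, to turn this bound into an equality it suffices to produce one ensemble $\mathcal{E}=(p_i,\tau_i)_i$ all of whose members are minimizers of $H(\cdot,H(v))$ and whose barycenter $\sum_i p_i\tau_i$ is mapped by $H(v)$ onto the uniform output distribution; such an $\mathcal{E}$ is then automatically maximally informative.

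The natural candidate --- and the one the statement points to --- is the equiprobable ensemble $\mathcal{E}$ formed by the $d^2$ normalized vectors of $H(\bar v)$: $p_{mn}=1/d^2$ and $\tau_{mn}=|\widehat H_{mn}(\bar v)\rangle\langle\widehat H_{mn}(\bar v)|$ with $\widehat H_{mn}(\bar v)=H_{mn}(\bar v)/\|H_{mn}(\bar v)\|$. That every $\tau_{mn}$ minimizes $H(\cdot,H(v))$ is exactly the content of Theorem \ref{T1}, so the first condition is free. For the second, I would first note that $H(\bar v)$ is again a SIC-POVM: when $H$ is real one has $H(\bar v)=\overline{H(v)}$ (the entrywise conjugate), and complex conjugation is an antiunitary involution, hence preserves the moduli of all pairwise inner products; equivalently, the Jedwab--Wiebe list of admissible $v$ for $d\in\{2,8\}$ is closed under conjugation. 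Being a SIC-POVM (equivalently, a $2$-design, hence a tight frame), the vectors of $H(\bar v)$ satisfy $\sum_{m,n}\tau_{mn}=d\,\mathrm I_d$, so the barycenter of $\mathcal{E}$ is the maximally mixed state $\rho_{\mathrm{av}}=\frac1{d^2}\sum_{m,n}\tau_{mn}=\frac1d\mathrm I_d$; consequently $\operatorname{tr}(\rho_{\mathrm{av}}\Pi_{jk})=\frac1d\operatorname{tr}(\Pi_{jk})=\frac1{d^2}=\frac1k$ for each element $\Pi_{jk}$ of $H(v)$, which is the second condition.

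Feeding the two verifications into the equality clause of (\ref{infpowent}) gives $W(H(v))=\ln(2d/(d+1))$, attained by the equiprobable ensemble on $H(\bar v)$; specializing to $d=8$ yields $W(HL)=\ln(16/9)=2\ln(4/3)$, since $16/9=(4/3)^2$.

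I do not anticipate a genuine obstacle once Theorem \ref{T1} is in hand --- the argument is essentially bookkeeping on top of the equality criterion. The one place deserving a sentence of care is the claim that $H(\bar v)$ is a SIC-POVM (this is where the conjugation symmetry of the construction is used), together with the constancy of the norms $\|H_{jk}(v)\|^2=d-1+|v|^2$ over $j,k$, which is what legitimizes identifying the measurement probabilities with the normalized quantities $|\widehat H_{mn}(\bar v)\cdot\widehat H_{jk}(v)|^2/d$ used above and already invoked implicitly in the proof of Theorem \ref{T1}.
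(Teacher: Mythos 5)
Your proof is correct, and it reaches the result by the same overall skeleton as the paper (upper bound $\ln k-\min H$ from Theorem \ref{T1} plus the equality criterion of \cite[Proposition 6]{SloSzy16}), but you verify the equality condition differently. The paper disposes of it in one line by group covariance: $H(v)$ is covariant with respect to the irreducible three-qubit Pauli representation, so, by the Holevo-type observation recalled in Sec.~\ref{Infpow}, the equiprobable orbit of any minimizer already has maximally mixed barycenter and hence uniform output statistics; the set $H(\bar v)$ is then identified (implicitly at this point, explicitly in the later sections) as such an orbit. You instead bypass the group action entirely: you observe that $H(\bar v)$ is itself a SIC-POVM --- because it is the image of $H(v)$ under the antiunitary conjugation $C$, equivalently because the Jedwab--Wiebe list of admissible $v$ is closed under conjugation --- and therefore a tight frame with $\sum_{m,n}\tau_{mn}=d\,\mathrm I_d$, which gives the barycenter $\mathrm I_d/d$ and the uniform output distribution directly. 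Your route is more self-contained (it needs neither irreducibility of the Pauli representation nor the fact that $H(\bar v)$ is a single orbit), at the modest cost of having to justify that $H(\bar v)$ is again a SIC-POVM, a fact the paper only records later in the section on twin Hoggar lines; your remark about the constancy of the norms $\|H_{jk}(v)\|^2=d-1+|v|^2$ is also a point the paper leaves implicit. Both arguments are sound and yield $W(HL)=\ln(16/9)=2\ln(4/3)$.
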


\begin{proof}
Since $\operatorname*{Sym}(H(v))$ acts irreducibly on
$\mathcal{P}(\mathbb{C}^d)$, the equality in (\ref{infpowent}) holds
for $\Pi = H(v)$. Hence, applying (\ref{infpowent}) and Theorem \ref{T1}, we get
$$W(\Pi) = \ln(d^2) - \ln(d(d+1)/2)= \ln(2d/(d+1)).$$
Then $d^2$ equiprobable states corresponding to the vectors from $H(\bar{v})$ form a maximally informative ensemble.
\end{proof}

\section{Much ado about zeros}

\vspace{-0.1cm}

In the above reasoning the zeros of the probability distribution of measurement
outcomes play a key role. We already know that for the pre-measurement state of
the system being the entropy minimizer, their number is maximal and equals $28$
for Hoggar lines, see also \cite{AppEriFuc11}. Let us now have a closer look at
the localization of these $28$ zeros for $64$ minimizers described by Theorem \ref{T1}.

From now on we label the elements of the Hoggar lines SIC-POVM $H(v)$
by the elements of $\Sigma := \mathbb{Z}_{2}^{3} \otimes \mathbb{Z}_{2}^{3}$,
the translation group of the six-dimensional affine space over $GF(2)$,
isomorphic to $P_3$ acting regularly on $H(v)$.
Moreover, we assume for definiteness that $H$ is the (real) Sylvester-Hadamard matrix
$H_3$ considered, e.g.\ in \cite{JedWie15}, writing the indices in the binary expansion
as elements of $\mathbb{Z}_{2}^{3}$. In this case we have $h_{\iota\kappa}=(-1)^{\iota_1\kappa_1+\iota_2\kappa_2+\iota_3\kappa_3}$ 
for $\iota,\kappa\in \mathbb{Z}_{2}^{3}$. Moreover, the standard representation of the three-qubit Pauli group, constructed
from the Pauli matrices $\sigma_X$ and $\sigma_Z$, acts (up to a phase) on vectors in $H(v)$ and $H(\bar v)$ in the following way:
\begin{equation*}
(\sigma_Z^{\alpha_1}\sigma_X^{\beta_1}\otimes\sigma_Z^{\alpha_2}\sigma_X^{\beta_2}\otimes\sigma_Z^{\alpha_3}\sigma_X^{\beta_3})
H_{\iota\kappa}(w) = H_{\iota+\alpha,\kappa+\beta}(w)
\end{equation*}
for $\iota,\kappa,\alpha,\beta\in \mathbb{Z}_{2}^{3},w=v,\bar v$. Consider now the blocks
\begin{equation*}
B_{\mu\nu} := \{(\iota,\kappa) : H_{\iota\kappa}(v)\cdot H_{\mu\nu}(\bar{v}) = 0, \iota,\kappa \in \mathbb{Z}_{2}^{3}\},
\end{equation*}
of zeros of $T_{\mu\nu}$ for $\mu,\nu \in \mathbb{Z}_{2}^{3}$, where $T_{\mu\nu}$ is as in the proof of Theorem \ref{T1}. It follows from (\ref{h}) that
$(\iota,\kappa) \in B_{\mu\nu}$ if and only if $\iota \neq \mu$, $\kappa \neq \nu$, and $h_{\mu\nu}h_{\iota\nu}+h_{\mu\kappa}h_{\iota\kappa}=0$,
or equivalently $h_{\mu+\iota,\nu+\kappa}=-1$, for $\iota,\kappa\in \mathbb{Z}_{2}^{3}$. Hence
$B_{00} = \{(\iota,\kappa) :h_{\iota\kappa}=-1, \iota,\kappa \in \mathbb{Z}_{2}^{3}\}$
and $B_{\mu\nu} = B_{00} + (\mu,\nu)$ for $\mu,\nu \in \mathbb{Z}_{2}^{3}$.
It is easy to show that
$\mathscr{B}_8 := \{B_{\mu\nu}\}_{\mu,\nu \in \mathbb{Z}_{2}^{3}} \subset \Sigma$ constitutes a \emph{symmetric} (\emph{Menon}) $(64,28,12)$-\emph{design},
see \cite{MooPol13} for terminology from design theory.
Moreover, this design is the development of the respective \emph{difference set} in $\Sigma$. More precisely, one can show that $\mathscr{B}_8$ is so called \emph{sympletcic design}
${\mathscr{S}}^{-1}(6)$ analysed by Kantor in \cite{Kan75}. He proved that $\operatorname*{Aut}({\mathscr{S}}^{-1}(6))$, the automorphism group
of ${\mathscr{S}}^{-1}(6)$, is a semidirect product
of $\Sigma$ by the symplectic group $Sp(6,2)$, i.e.\ the group of linear transformations of the vector space $\mathbb{Z}_2^{6} \simeq \Sigma$ over $GF(2)$ preserving the natural symplectic form. More precisely, for all $(\iota,\kappa) \in \Sigma$ and $M \in Sp(6,2)$ the respective affine transformation sends $B_{\mu\nu}$ onto $B_{M(\mu,\nu)+(\iota,\kappa)}$ for all $\mu,\nu \in \mathbb{Z}_{2}^{3}$.
Moreover, $\operatorname*{Aut}({\mathscr{S}}^{-1}(6))$ acts $2$-transitively on blocks \cite{Kan75}.

\vspace{-0.3cm}

\section{Twin sets of Hoggar lines}

\vspace{-0.1cm}

Now, let us have a closer look at the set $H(\bar{v})$, all of whose elements
are minimizers for the entropy of $H(v)$,
and form a maximally informative ensemble for this measurement.
It follows from Theorem~\ref{T1} and \cite[Theorem 1]{JedWie15} that
$H(\bar{v})$ is also: the `tetrahedral' POVM for $d=2$, and the set
of Hoggar lines for $d=8$. The question arises, how these two subsets
of $\mathcal P(\mathbb {C}^d)$, $H(v)$ and $H(\bar{v})$, are related
to one another. Let $C: \mathbb {C}^d \rightarrow \mathbb {C}^d$ be
a (\emph{complex}) \emph{conjugation with respect to the basis} $(e'_l)_{l=0}^{d-1}$ from the proof of Theorem 1,
i.e.\ an antiunitary involutive map keeping the basis invariant \cite{Garetal14}, given by
$C(\sum_{l=0}^{d-1} x_l e'_l) := \sum_{l=0}^{d-1} \bar{x_l} e'_l$ for $(x_l)_{l=0}^{d-1} \in \mathbb {C}^d$.
Then $H(\bar{v})$ is the image of $H(v)$ under the collineation generated by $C$; more precisely, $H_{jk}(\bar{v}) = C (H_{jk}(v))$ for
$j,k=0,\dots,d-1$.

To express the relationship between $H(v)$ and $H(\bar{v})$ more
geometrically, we can use the generalized Bloch representation.
For $d=2$, these SIC-POVMs are represented on the Bloch sphere as
two dual regular tetrahedra that together form a stellated octahedron a.k.a.
\emph{stella octangula}. For $d=8$ we get in the generalized Bloch
representation two regular $63$-dimensional simplices inscribed in the
unit sphere in a $63$-dimensional real vector space, where one is the image of
the other under a reflection through a $35$-dimensional linear subspace.
It is so, because in the generalized Bloch representation of quantum states as
elements of the unit sphere of the real $(d^2-1)$-dimensional vector space of traceless
Hermitian $d \times d$ matrices, a conjugation map acting on $\mathbb {C}^d $ is transformed
into a transpose operation (both defined in the same basis), see e.g.\ \cite[p.~4]{Leietal06}.
Under this operation only traceless symmetric real matrices are invariant, and they form
a $(d+2)(d-1)/2$-dimensional vector subspace.

Moreover, it turns out that for $d=8$ the sets $H(v)$ and $H(\bar v)$ correspond to `twin' sets of Hoggar lines
considered in \cite[Sec.~2.3]{Zhu15a}. Let $\psi$ be a fiducial vector for some $HL$. Zhu showed that
there is an order-7 unitary $U_7$ in $\left(\operatorname*{Sym}\left(HL\right)\right)_\psi$ with six one-
and one two-dimensional eigenspaces, such that the latter contains both $\psi$ and its `twin' vector, ${\psi}'$,
which also generates (another) set of Hoggar lines $HL'$, lying on the same orbit under action of the Clifford group.
To be more specific, assume again that $H=H_3$. Let $\psi$ and $\psi'$ be given, respectively, by Eqs.~(14) and (3)
in \cite{Zhu15a}. Then, all four sets of Hoggar lines: $H(v)$, $H(\bar v)$, and those generated by $\psi$ and $\psi'$,
are covariant with respect to the standard representation of the three-qubit Pauli group.
Let $U$ denote a Clifford unitary for this group from \cite[p.~2]{JedWie15}. Now, observe that, up to a normalization factor,
$U\psi = H_{(0,0,0)(0,1,1)}(v)$ and $U\psi'=H_{(1,0,1)(0,0,0)}(\bar{v})$,
and they are indeed fiducial vectors, respectively, for $H(v)$ and $H(\bar v)$, lying in the same two-dimensional
eigenspace of an order-7 unitary $UU_7U^{\dagger}$.

Finally, note that the symmetry groups of both Hoggar SIC-POVMs, $H(v)$ and $H(\bar{v})$, are identical.
It follows from the fact that the symmetry groups of the `twin' sets of Hoggar lines $HL$ and $HL'$ described above are the same.
Indeed, these symmetry groups are generated by the same representation of the three-qubit Pauli group and, respectively, the stabilizers
of $\psi$ and ${\psi}'$.
Thus, it suffices to show that the stabilizer of ${\psi}'$ is contained in the symmetry group of $HL$.
The stabilizer has two generators: $U_7$, which stabilizes both fiducials, and $U_{12}$, an order-12 unitary defined
in \cite[Sec.10.4]{Zhu12}. By straightforward calculation, we get that $U_{12}$ permutes the elements of $HL$
and so belongs to its symmetry group. The situation is similar for two dual `tetrahedral' POVMs in
$d=2$ sharing also the same symmetry group.

\bibliography{PRA}

\end{document}